\def\pmb#1{\setbox0=\hbox{$#1$}%
  \kern-.025em\copy0\kern-\wd0
  \kern.05em\copy0\kern-\wd0
  \kern-.025em\raise.0433em\box0}
\def\pmbs#1{\setbox0=\hbox{$\scriptstyle #1$}%
  \kern-.0175em\copy0\kern-\wd0
  \kern.035em\copy0\kern-\wd0
  \kern-.0175em\raise.0303em\box0}
\def\be{\begin{equation}}
\def\ee{\end{equation}}
\def\bea{\begin{eqnarray}}
\def\eea{\end{eqnarray}}
\def\vec#1{\mbox{\boldmath$#1$}}
\def\Om{\Omega}
\def\bom{\mbox{\boldmath $\omega$}}
\def\bna{\mbox{\boldmath $\nabla$}}
\def\vece{\vec{e}}
\def\la{\langle}
\def\ra{\rangle}
\def\hsp5{\hspace{5mm}}
\def\case#1/#2{\textstyle\frac{#1}{#2}}
\renewcommand{\vector}[1]{\bm{#1}}
\newcommand{\textfrac}[2]{{\textstyle \frac{#1}{#2}}}
\newcommand{\nt}{\tilde{N}}
\newcommand{\Htilde}{\tilde{\mathscr{H}}}
\def\cg{{\cal G}}
\theoremstyle{plain}
\newtheorem{theorem}{Theorem}[section]
\newtheorem{corollary}[theorem]{Corollary}
\theoremstyle{remark}
\newtheorem*{remark}{Remark}
\def\Sp{\Sigma_+}
\def\Sb{\bar{\Sigma}}
\def\St{\tilde{\Sigma}}
\def\Sc{\check{\Sigma}}
\def\vsq{{v^2}}
\def\Omk{\Omega_\mathrm{k}}
\title{\sc Future asymptotics of tilted Bianchi type II cosmologies}
\author{
\sc Sigbj{\o}rn Hervik $^{1}$\thanks{Electronic address:
{\tt sigbjorn.hervik@uis.no}}\
,\ Woei Chet Lim $^{2}$\thanks{Electronic address:
{\tt wclim@aei.mpg.de}}\
,\ Patrik Sandin $^{3}$\thanks{Electronic address:
{\tt patrik.sandin@kau.se}}\ , \ and Claes Uggla$^{3}$\thanks{Electronic address:
{\tt claes.uggla@kau.se}}\\
$^{1}${\small\em Faculty of Science and Technology, University of Stavanger,}\\
{\small\em N-4036 Stavanger, Norway}\\
$^{2}${\small\em Albert-Einstein-Institut, Am M\"uhlenberg 1,}\\
{\small\em D-14476 Potsdam, Germany}\\
$^{3}${\small\em Department of Physics, University of Karlstad,}\\
{\small\em S-65188 Karlstad, Sweden}
}
\begin{document}
\maketitle
% ----------------------------------------------------------------
\begin{abstract}
In this paper we study the future asymptotics of spatially
homogeneous Bianchi type II cosmologies with a tilted perfect
fluid with a linear equation of state. By means of Hamiltonian
methods we first find a monotone function for a special tilted
case, which subsequently allows us to construct a new set of
monotone functions for the general tilted type II cosmologies.
In the context of a new partially gauge invariant dynamical
system, this then leads to a proof for a theorem that for the
first time gives a complete description of the future
asymptotic states of the general tilted Bianchi type II models.
The generality of our arguments suggests how one can produce
monotone functions that are useful for determining the
asymptotics of other tilted perfect fluid cosmologies, as well
as for other sources.
\end{abstract}

% ----------------------------------------------------------------
\section{Introduction}
%-----------------------------------------------------------------

Spatially homogeneous anisotropic perfect fluid models have
during the last decades been successfully studied using a
dynamical systems approach. The book~\cite{waiell97} summarizes
most of the presently known results about the so-called
non-tilted perfect fluid cosmologies, while the more general
`tilted' perfect fluid models have been primarily investigated
more recently~\cite{hew91}---\cite{colher08}.

In all of the papers investigating tilted models, the analysis
has rested on standard techniques from dynamical systems
theory. Most of the results concern the identification of fixed
points and a subsequent linear stability analysis of these
points. In order to get a grip on the global aspects of the
solutions, an effective tool is the use of monotone functions.
However, such monotone functions are hard to find, and in most
of the previous works on tilted models the monotone functions
were obtained by brute force, trial and error, and luck. It
would therefore be desirable to have a more systematic method
to seek and find such monotone functions.

For non-tilted spatially homogeneous perfect fluid models,
virtually all known results crucially rely on the existence
of conserved quantities and monotone functions. These have turned
out to be connected to the existence of certain symmetries,
intimately associated with conservation laws such
as the preservation of the number of particles in a fluid element,
and the so-called scale-automorphism group~\cite{heiugg10}.
Although not necessary, the symmetries and associated structures
were, to a large extent, found by means of Hamiltonian techniques,
see ch. 10 in~\cite{waiell97} and~\cite{heiugg10}. One aim
of this paper is to illustrate that one can fruitfully use similar
methods that previously have been applied to non-tilted models to
tilted ones. To do so, we will consider an example---the tilted
Bianchi type II models.

The tilted perfect fluid Bianchi type II models have been
analyzed before as a dynamical system in~\cite{hewetal01}. In
that paper, as well as in the present one, the perfect fluid
was assumed to obey a linear equation of state characterized by
$\tilde{p}/\tilde{\rho}=w=\textrm{const}$, where $\tilde{p}$
and $\tilde{\rho}$ are the pressure and energy density with
respect to the rest frame of the fluid, respectively; special
cases of interest are dust, $w=0$, radiation, $w=\frac{1}{3}$,
a stiff fluid, $w=1$, while a cosmological constant $\Lambda$
can formally be regarded as a perfect fluid with $w=-1$,
however, in this paper we will consider the range $-1<w<1$.

In~\cite{hewetal01} the fixed points of the system were found
and their linear stability properties were studied. It was
observed that the future stability of the fixed points depended
on the equation of state parameter $w$, but that a future
stable fixed point existed for all $w$ in the range $-1<w<1$
under consideration. On the basis of the linear analysis and
numerical computations it was conjectured that the future
linearly stable fixed points were global attractors in the full
state space. This conjecture was corroborated by means of
monotone functions for some ranges of $w$. However, the
complete picture was not fully drawn, mainly because of the
lack of a set of sufficiently restrictive monotone functions.
In this paper we will use methods that previously have been
applied to the non-tilted models and find a set of new monotone
functions that are sufficiently restrictive to determine the
future asymptotics of all tilted solutions, thus filling in the
missing gaps in the conclusions drawn in~\cite{hewetal01}.

The outline of the paper is as follows. In
Section~\ref{sec:BII} we give a new partially gauge invariant
dynamical system for the general tilted Bianchi type II models,
derived in Appendix~\ref{derdyn}. Then, based on the structure
of monotone functions obtained for more special models by means
of Hamiltonian methods, presented in Appendix~\ref{ham}, we
obtain the new monotone functions in Section~\ref{asymp}, which
leads to a new theorem that describes the future asymptotic
states of all tilted Bianchi type II cosmologies. Finally, in
Section~\ref{disc} we conclude with a discussion about the
structure of the monotone functions, why they exist, and why
one can hope to expect similar structures in other models.

%-----------------------------------------------------------------
\section{Dynamical systems description of tilted Bianchi type II cosmologies}
\label{sec:BII}
%-----------------------------------------------------------------

In~\cite{sanugg10} the so-called conformally Hubble-normalized
orthonormal frame equations are given in full generality. These
are in turn specialized to the spatially homogeneous Bianchi
case in Appendix~\ref{derdyn} and then to the presently studied
Bianchi type II models with a general tilted perfect fluid with
a linear equation of state. Then a subsequent set of new
variables, invariant under frame rotations in the 23-plane (see
Appendix~\ref{derdyn}), yield the following state vector and
dynamical system.

\vspace*{4mm}
\noindent {\em State vector}:
\be
{\bf S} = (\Sigma_+,\bar{\Sigma}, \tilde{\Sigma}^2,\check{\Sigma}^2,
\Omega_\mathrm{k},v^2)\, ,
\ee
where we treat $\tilde{\Sigma}^2,\,\Sc^2$, and $v^2$ as variables, but where we have
refrained from giving them new names.

\vspace*{4mm} \noindent {\em Evolution equations}:
\begin{subequations}\label{dynsys}
\begin{align}
\Sigma_+^\prime &= -(2-q)\Sigma_+ - 3\Sc^2\ + 4\Omk + \tfrac12(1+w)G_+^{-1}\vsq\Omega\, ,\\
\Sb^\prime &= -(2-q)\Sb -2\sqrt3\St^2 +\sqrt3\Sc^2  +\tfrac{\sqrt3}{2}(1+w)G_+^{-1}\vsq\Omega\, ,\\
(\St^2)^\prime &= -2(2-q-2\sqrt3\Sb)\St^2\, ,\\
(\Sc^2)^\prime &= -2[2-q - 3\Sigma_+ + \sqrt3\Sb]\Sc^2\, ,\\
\Omk^\prime &= 2(q - 4\Sigma_+)\Omk\, ,\\
(v^2)^\prime &= 2G_-^{-1}\,(1- \vsq)[3w-1 - \Sigma_+ - \sqrt3\Sb]\vsq\, .
\end{align}

\vspace*{4mm} \noindent {\em Constraint equation}:
\be\label{codazzi} f({\bf S}) = 4\Sc^2\Omk - (1+w)^2G_+^{-2}\vsq\Omega^2 = 0\, .
\ee
\end{subequations}

In the above equations $\Omega$ is given by the Gauss constraint
\be\label{gauss} \Omega = 1 - \Sigma^2 - \Omk\, , \ee
where
\be \Sigma^2 = \Sigma_+^2 + \Sb^2 + \St^2 + \Sc^2\, . \ee
The condition $\Omega\geq 0$ in combination with~\eqref{gauss}
yields $0\leq \Sigma_+^2 + \Sb^2 + \St^2 + \Sc^2 + \Omk\leq 1$.
The deceleration parameter $q$ is given by
\be q = 2\Sigma^{2} + \textfrac{1}{2}G_+^{-1}[1+3w +
(1-w)v^2]\Omega\, , \ee
while
\be
G_\pm = 1 \pm wv^2\, ;
\ee
finally, ${}^\prime$ denotes differentiation with respect to a
dimensionless time parameter $\tau$, determined by $d\tau = H
dt$, where $t$ is the clock time along the congruence normal to
the spatially homogeneous hypersurfaces.

We now give a brief description of the invariant subsets and
fixed points of the system~\eqref{dynsys}, which is analogous
to the analysis given by Hewitt et al.~\cite{hewetal01}. Note that
although our system is invariant under frame rotations in the 23-plane
it is not invariant under all rotations. Hence there exist multiple
representations of solutions, for further comments on this,
see~\cite{hewetal01}.

\begin{table}[h!]
\begin{tabular}{lllcl}
\hline
& Name & Restrictions & Dimension & Interior/Boundary\\
\hline \vspace{-3.5mm}\\
i) & Non-tilted non-vacuum Bianchi type II & $v^2 = \check{\Sigma}^2 = 0$ & 4 &
Boundary\\
ii) & Non-tilted non-vacuum Bianchi type I & $v^2 = \Omk = 0$ & 4 &
Boundary\\
iii) & Vacuum Bianchi type II & $\check{\Sigma}^2 = \Omega = 0$ & 4 &
Boundary\\
iv) & Vacuum Bianchi type I (Kasner) & $\Omk = \Omega = 0$ & 4 &
Boundary\\
v) & Extreme tilt & $v^2 = 1$ & 4 &
Boundary\\
vi) & Orthogonally transitive Bianchi type II &
$\tilde{\Sigma}^2 = 0$ & 4 & Interior\\
\hline
\end{tabular}
\caption{Invariant sets of the state space. The last column
indicates if a subset is part of the boundary of the
general tilted Bianchi type II models or if it is an interior subset.}
\label{invariant sets}
\end{table}

\vspace*{4mm} \noindent {\em Fixed points}:
\begin{itemize}
\item[i)] The flat Friedmann solution, $\mathrm{F}$: $\quad -1 < w <
    1$,
\begin{displaymath}
\Sigma_+=\bar{\Sigma}=\tilde{\Sigma}^2 = \check{\Sigma}^2 = \Omk = v^2 = 0\, .
\end{displaymath}
\item[ii)] The Collins-Stewart solution, $\mathrm{CS}$: $\quad -
    \frac{1}{3} < w < 1$,
\begin{displaymath}
\Sigma_+=\frac{1}{8}(3w+1)\,,\quad
\bar{\Sigma}=\tilde{\Sigma}^2 = \check{\Sigma}^2 = v^2 = 0\, ,\quad \Omk =
\frac{3}{64}(3w + 1)(1 - w)\, .
\end{displaymath}
\item[iii)] Hewitt's solution, $\mathrm{H}$: $\quad  \textfrac{3}{7} <
    w < 1$,
\begin{displaymath}
\Sigma_+=\frac{1}{8}(3w+1)\, ,\quad \bar{\Sigma}=
\frac{\sqrt{3}}{8}(7w - 3)\, ,\quad \tilde{\Sigma}^2 = 0\, ,\quad
\check{\Sigma}^2 = \frac{3(1-w)(11w+1)(7w-3)}{16(17w-1)}\, ,
\end{displaymath}
\begin{displaymath}
\Omk= \frac{3(1-w)(5w+1)(3w-1)}{4(17w-1)}\, , \quad v^2 =
\frac{(3w-1)(7w-3)}{(11w+1)(5w+1)}\, .
\end{displaymath}
\item[iv)] Hewitt et al.'s 1-parameter set of
    solutions,\footnote{There is a misprint
    in~\cite{hewetal01} where the square root on the $b$ in
    the $\Sigma_1$ expression of the line of fixed points,
    HL, has disappeared.} $\mathrm{HL}$: $\quad
    w=\frac{5}{9},\quad 0 < b=\mathrm{const} < 1$,
\begin{displaymath}
\Sigma_+=\frac{1}{3}\, ,\quad \bar{\Sigma}=
\frac{1}{3\sqrt{3}}\, ,\quad \tilde{\Sigma}^2 =
\frac{4}{27}\,b\, ,\quad \check{\Sigma}^2 =
\frac{4(4b+1)(8-3b)}{513}\, ,
\end{displaymath}
\begin{displaymath}
\Omk = \frac{(2b+1)(17-8b)}{171}, \quad v^2
= \frac{3(4b+1)(2b+1)}{(17-8b)(8-3b)}\, .
\end{displaymath}
\item[v)] Hewitt et al.'s extreme tilted point, $\mathrm{Het}$:
    $\quad  -1 < w < 1$,
\begin{displaymath}
\Sigma_+=\frac{1}{3}\, ,\quad \bar{\Sigma}=
\frac{1}{3\sqrt{3}}\, ,\quad \tilde{\Sigma}^2 =
\frac{4}{27}\, ,\quad \check{\Sigma}^2 =
\frac{100}{513}\, , \quad \Omk = \frac{3}{19},\quad v^2 = 1\, .
\end{displaymath}
\end{itemize}
The system also admits the following fixed point sets:
\emph{the Kasner circle} $\mathrm{K}^\ocircle$, for which
$v^2=0$, \emph{the Kasner lines} $\mathrm{KL}_\pm$, for which
$v^2=\mathrm{const}$, and \emph{the extremely tilted Kasner
circle} $\mathrm{Ket}^\ocircle$, for which $v^2=1$. These
subsets reside on the Bianchi type I vacuum boundary, i.e.,
$\Sigma^2=1$, with $\tilde{\Sigma}=\check{\Sigma}=0$,
see~\cite{hewetal01}, however, since these fixed points do not
play a prominent role in this paper we refrain from giving them
explicitly.

\begin{remark}
Below we will refer to the relevant fixed point values for $\Sp$
and $\Sb$ by $\Sigma_{+0}$ and ${\Sb}_0$, respectively.
\end{remark}
%

%-----------------------------------------------------------------
\section{Future asymptotes in tilted Bianchi type II cosmology}\label{asymp}
%-----------------------------------------------------------------

In what follows certain monotone functions will play a crucial role.
Based on our results in Appendix~\ref{derdyn} and~\ref{ham}, we
hence begin by deriving them.

%-----------------------------------------------------------------
\subsection{Monotone functions}\label{mon}
%-----------------------------------------------------------------

There are several auxiliary equations that are useful in the context
of monotonic functions, see Appendix~\ref{derdyn}:

\vspace*{4mm} \noindent {\em Auxiliary equations}:
\begin{subequations}\label{aux}
\begin{align}
\Omega^\prime &= [2q - (1 + 3w) + (1+w)(3w-1 - \Sigma_+ - \sqrt{3}\bar{\Sigma})G_+^{-1} v^2]\Omega\, ,\\
Q^\prime &= -[2(1-q) + \Sigma_+ + \sqrt{3}\bar{\Sigma}]Q\, ,\\
\Psi^\prime &= [2q - (1+3w)]\Psi\, , \label{aux3}
\end{align}
\end{subequations}
where
\be\label{Psidef} Q = (1+w)G_+{-1}\,v\,\Omega\, ,\qquad \Psi =
\Gamma^{-(1-w)}\,G_+^{-1}\,\Omega\, .
\ee

Since
\be 2q - (1+3w) = 4\Sigma^2 - (1 + 3w)(1-\Omega) +
(1-3w)(1+w)G_+^{-1}v^2\Omega \geq 0\, ,\quad \text{if}\,
%[1+3w +(1-w)v^2]G_+^{-1}\Omega + [-(1+3w)] \geq 0\, ,\quad \text{if}\,
-1<w\leq -1/3\, ,
\ee
as follows from~\eqref{decpar}, $\Psi$ is a monotonically increasing function
when $-1<w\leq -1/3$; henceforth we denote $\Psi$ in this interval of the
equation of state by $M_\mathrm{F}$.

Before we continue, let us introduce some notation:
\begin{subequations}
\begin{align}
\phi_* &= 1-\Sigma_{+0}\Sp-\Sb_0\Sb\, ,\\
\varphi_* &= [\Sigma_{+0}(\Sp-\Sigma_{+0})+\Sb_0(\Sb-\Sb_0)]^2\, ,\\
\bar{\varphi}_* &= [\Sb_0(\Sp-\Sigma_{+0})-\Sigma_{+0}(\Sb-\Sb_0)]^2\, ,
\end{align}
\end{subequations}
where the subscript $_*$ henceforth denotes a specific fixed
point, while $\Sigma_{+0}$ and $\Sb_0$ are the associated fixed
point values for $\Sigma_+$ and $\Sb$, respectively. In the
following it is important that $\phi_*>0$, which can be seen as
follows:
\be
\phi_* = \tfrac12\left[
1 - \Sigma_{+0}^2 - \Sb_{0}^2 + 1 - \Sigma_+^2 -  \Sb^2 +
(\Sigma_+ - \Sigma_{+0})^2 +  (\Sb -\Sb_{0})^2 \right]
> \tfrac12\left[1 - \Sigma_{+0}^2 - \Sb_{0}^2\right] >0\, ,
\ee
where we have used the Gauss constraint~\eqref{gauss} and $\Omega>0,\,\Omk >0$.

For non-tilted perfect fluid models
\be
M_\mathrm{CS:v^2=0} = \phi_\mathrm{CS}^{-2}\,\Omega_k^m\,\Omega^{1-m}
= \frac{\Omega_k^m\,\Omega^{1-m}}{(1-\Sigma_{+0}\Sigma_+)^2}\, ;\quad
m = \frac{3(1-w)\Sigma_{+0}}{8(1-\Sigma_{+0}^2)}\, ,\,\, \Sigma_{+0} = \frac{1}{8}(1+3w)\, ,
\quad \Sb_0=0\, ,
\ee
is a monotonically increasing function. However, it is
of interest to generalize this function by replacing $\Omega$ with $\Psi$, which
is equal to $\Omega$ in the non-tilted case, i.e.,
\be
M_\mathrm{CS} = \phi_\mathrm{CS}^{-2}\,\Omega_k^m\,\Psi^{1-m}\, ,
\ee
which leads to the following time derivative in the present fully tilted state space:
\be (\ln M_\mathrm{CS})'= 3\phi_\mathrm{CS}^{-1}\left[(1-w)
\left(\frac{\varphi_\mathrm{CS}}{(1-\Sigma_{+0}^2)\Sigma_{+0}^2}
+ \frac{\bar{\varphi}_\mathrm{CS}}{\Sigma_{+0}^2} +
\St^2\right) + \frac18 (3-7w)(2\check{\Sigma}^2 + (1+w)G_+
^{-1}\,v^2\Omega)\right]\, , \ee
and hence $M_\mathrm{CS}$ is monotonically increasing when
$-1/3 < w \leq 3/7$.

In Appendix~\ref{ham} we derive a monotonic function for the tilted
orthogonally transitive case $\tilde{\Sigma}^2=0$ which can be written as
\be
M_\mathrm{H} =
\phi_\mathrm{H}^{-(3+13w)}(\check{\Sigma}^2)^{\frac{7w-3}{2}}\Omk^{3w-1}\Psi^4\, .
\ee
It sometimes turns out to be the case that a monotone function
for a given state space is also monotone in a more general
state space in which the original is embedded in, at least for
a limited range of the equation of state parameter, see
e.g.~\cite{heretal06,heiugg10}. We hence compute the time
derivative for $M_\mathrm{H}$ in the full tilted case; this
gives us

\be
(\ln M_\mathrm{H})' =\phi_\mathrm{H}^{-1}\left[
\frac{49(16\varphi_\mathrm{H} + 3(1-w)(3+13w)\bar{\varphi}_\mathrm{H})}{8+\frac{3}{8}(91w-31)(7w-3)}
+ \frac{3}{4}(5-9w)(13w+3)\St^2\right]\, ,
\ee
and hence $M_\mathrm{H}$ is monotonically increasing when
$3/7<w\leq 5/9$.

The above monotonic functions all have the form
\be\label{monform}
M_{*}=\phi_*^{-\beta}\,(\St^2)^{\alpha_1}(\check{\Sigma}^2)^{\alpha_2}
\Omk^{\alpha_3}(\Psi)^{\alpha_4}\, ,
\ee
where $\beta=2(\alpha_1+\alpha_2+\alpha_3+\alpha_4)$, and
$\alpha_1$, $\alpha_2$, $\alpha_3$, $\alpha_4$, $\beta\geq0$.
For the individual cases we have
\begin{align}
M_\mathrm{F}:& \quad \alpha_1=\alpha_2=\alpha_3=0\, ,\ \alpha_4=1\, ,\ \beta=2\, ,\\
M_\mathrm{CS}:& \quad \alpha_1=\alpha_2=0\, ,\ \alpha_3=m\, ,\ \alpha_4=1-m,\ \beta=2\, ,\\
M_\mathrm{H}:& \quad \alpha_1=0\, ,\ \alpha_2=\tfrac12(7w-3)\,,\ \alpha_3=3w-1\,,\ \alpha_4=4\,,\
\beta=3+13w\, .
\end{align}

Let us assume the form~\eqref{monform} in order to find a monotonic function for the range
$5/9 <w <1$. We obtain
\be
M_\mathrm{Het} = \phi_\mathrm{Het}^{-46}(\St^2)^4(\check{\Sigma}^2)^{10}\Omk^9\, ,
\ee
and hence
$\alpha_1=4\,,\, \alpha_2=10\,,\, \alpha_3=9\,,\, \alpha_4=0\,,\, \beta=46$,
which leads to
\be
(\ln M_\mathrm{Het})' = \phi_\mathrm{Het}^{-1}
\left[243\varphi_\mathrm{Het} + 207\bar{\varphi}_\mathrm{Het} + \frac{23(9w-5)}{3}G_+^{-1}(1-v^2)\Omega\right]\, ,
\ee
and thus $M_\mathrm{Het}$ is monotonically increasing when $5/9 <w <1$.

%-----------------------------------------------------------------
\subsection{Future asymptotic limits}\label{future}
%-----------------------------------------------------------------

Let us denote the invariant set for the general tilted Bianchi type case, for
which $(1-v^2)\,v^2\,\tilde{\Sigma}^2\,\check{\Sigma}^2\,\Omega\,\Omk\neq 0$, by ${\cal S}_\mathrm{Gen}$,
while we denote the orthogonally transitive case, for which
$(1-v^2)\,v^2\,\check{\Sigma}^2\,\Omega\,\Omk\neq 0$, by ${\cal S}_\mathrm{OT}$.

A local analysis, as in~\cite{hewetal01}, reveals that the
state space ${\bf S}$ have fixed points as \emph{local sinks}
according to Table~\ref{sinks}, which leads to the following
bifurcation diagram:
\be
\mathrm{F} \xrightarrow{\,\, w=-1/3\,\,} \mathrm{CS} \xrightarrow{\,\, w=3/7\quad}
\mathrm{H} \xrightarrow{w=5/9,\, b=0} \mathrm{HL} \xrightarrow{w=5/9,\, b=1}
\mathrm{Het}\, .
\ee
\begin{table}[h!]
\begin{center}
\begin{tabular}{lc}
{\bf Range of} $w$ & {\bf Sink} \\
\hline\vspace{-3.5mm}\\
$-1 < w \leq -1/3$ & $\mathrm{F}$ \\
$-1/3 < w \leq 3/7$ & $\mathrm{CS}$ \\
$3/7 < w < 5/9$ & $\mathrm{H}$ \\
$w = 5/9$ & $\mathrm{HL}$ \\
$5/9 < w <1$ & $\mathrm{Het}$ \\
\hline
\end{tabular}
\caption{Sinks for ${\cal S}_\mathrm{Gen}$.} \label{sinks}
\end{center}
\end{table}

It was conjectured in~\cite{hewetal01} that the above local
results hold globally modulo a set of solutions of measure
zero. Using the above monotone functions one can show that this
is indeed the case, and also identify all the exceptional
solutions. In the following theorem we show that \emph{all}
solutions that belong to ${\cal S}_\mathrm{Gen}$ end up
asymptotically at the above sinks:
\begin{theorem}\label{thm}
For all ${\bf x}\in {\cal S}_\mathrm{Gen}$
\[
\omega({\bf x}) = \begin{cases}\mathrm{F} &  -1 < w \leq -1/3 \\
\mathrm{CS} & -1/3 < w \leq 3/7 \\
\mathrm{H} & 3/7 < w < 5/9 \\
\mathrm{HL} & w = 5/9  \\
\mathrm{Het} & 5/9 < w <1
\end{cases}\, .
\]
\end{theorem}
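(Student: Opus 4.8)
The plan is to obtain Theorem~\ref{thm} from a single mechanism---the monotonicity principle---fed by the four monotone functions of Section~\ref{mon}, whose intervals of monotonicity were arranged to match exactly the five ranges of $w$ in Table~\ref{sinks} (with $M_\mathrm{H}$ covering $w=5/9$ and $M_\mathrm{Het}$ the open interval above it). First I would record the topological input: the Gauss constraint~\eqref{gauss} gives $0\le\Sp^2+\Sb^2+\St^2+\Sc^2+\Omk\le1$ and the fluid bound gives $0\le v^2\le1$, so the state space has compact closure, the flow is complete, and every forward orbit has a non-empty, compact, connected, invariant $\omega$-limit set $\omega({\bf x})$. Since each $M_*$ has the form~\eqref{monform} with all exponents non-negative, with non-negative bounded base factors, and with $\phi_*$ bounded below by a positive constant (its positivity was established above), $M_*$ is bounded and extends to a continuous function on the compact closure. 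Being monotone increasing along the flow, $M_*$ converges to a finite limit $L_*$ along the orbit through ${\bf x}$, so $M_*\equiv L_*$ on $\omega({\bf x})$; invariance then yields $(\ln M_*)'\equiv0$ on $\omega({\bf x})$. Consequently $\omega({\bf x})$ lies in the largest invariant subset of the zero set $Z_*=\{(\ln M_*)'=0\}$, and the whole proof reduces to computing that subset in each range.

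The second step is to read $Z_*$ off the explicit derivatives of Section~\ref{mon}, each a sum of manifestly non-negative terms on its interval. Setting $\varphi_*=\bar\varphi_*=0$ gives a linear system in $(\Sp-\Sigma_{+0},\Sb-\Sb_0)$ whose determinant is $-(\Sigma_{+0}^2+\Sb_0^2)$; for CS, H and Het this is non-zero, so the system forces $\Sp=\Sigma_{+0}$ and $\Sb=\Sb_0$ on $\omega({\bf x})$, while the remaining non-negative terms force the appropriate subset of $\St^2$, $\Sc^2$ and $v^2\Omega$ (or $(1-v^2)\Omega$ for Het) to vanish. For F the derivative $2q-(1+3w)$ decomposes into $4\Sigma^2$, $-(1+3w)(1-\Omega)$ and a non-negative tilt term; for $-1<w<-1/3$ these force $\Sigma^2=0$, $\Omega=1$ and $v^2=0$, which is exactly F, and the endpoint $w=-1/3$ is handled by the same $Z_*$ together with the invariance step below. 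Having pinned $\Sp,\Sb$ and the vanishing factors, I would impose invariance of $\omega({\bf x})$: because $\Sp$ and $\Sb$ are constant along it, $\Sp'=0$ and $\Sb'=0$, and combined with~\eqref{codazzi} and~\eqref{gauss} these become algebraic equations for the surviving variables $\Sc^2,\Omk,v^2$ whose solution should return precisely the coordinates of the relevant sink.

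The genuinely delicate ranges are those in which $Z_*$ is larger than a single point. At the endpoint $w=3/7$ the $(3-7w)$-terms of $(\ln M_\mathrm{CS})'$ drop out, and throughout $3/7<w<5/9$ the function $M_\mathrm{H}$ forces only $\St^2=0$ in addition to $\Sp=\Sigma_{+0},\Sb=\Sb_0$, so $Z_*$ contains an entire orthogonally transitive slice ${\cal S}_\mathrm{OT}$. Here I would argue in two stages: the monotonicity principle first places $\omega({\bf x})$ on the boundary $\St^2=0$, and then $M_\mathrm{H}$ restricted to that slice---where its $\St^2$-term is absent---is still strictly increasing except on $\{\Sp=\Sigma_{+0},\Sb=\Sb_0\}$, whose largest invariant subset is H by the $\Sp'=\Sb'=0$ computation. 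The case $w=5/9$ is the most degenerate: the coefficient $\tfrac34(5-9w)(13w+3)$ of $\St^2$ in $(\ln M_\mathrm{H})'$ vanishes identically, so $Z_*$ collapses only to $\{\Sp=1/3,\Sb=1/(3\sqrt3)\}$, and imposing $\Sp'=\Sb'=0$ there produces precisely the one-parameter line of fixed points HL; since $\omega({\bf x})$ is connected and sits inside this line of fixed points, it reduces to a single point of HL, consistent with $\omega({\bf x})\subseteq\mathrm{HL}$.

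I expect the main obstacle to be precisely this boundary bookkeeping---proving that the largest invariant subset of $Z_*$ contains nothing beyond the advertised sink. The vanishing factors only yield a condition of the form $v^2\Omega=0$ (for F, CS, H) or $(1-v^2)\Omega=0$ (for Het), so even after fixing $\Sp=\Sigma_{+0}$ and $\Sb=\Sb_0$ one must rule out that $\omega({\bf x})$ settles on a spurious point of the corresponding boundary---for example a point carrying the correct $\Sp,\Sb$ but a different $\Omk$, or a point of the vacuum boundary $\Omega=0$ or the extreme-tilt boundary $v^2=1$. This is done by substituting the fixed values into $\Omk'$, $\Sp'$ and the constraint~\eqref{codazzi} and checking that no such configuration is invariant; a representative computation is that on $\{\Omega=0,\ \Sp=\Sigma_{+0},\ \Sb=\Sb_0\}$ the sign of $\Omk'$ is definite, excluding it. Carrying out this elimination in each of the five ranges, together with the connectedness argument that forces convergence to a single point of HL at $w=5/9$, is where the real labour lies; the monotone functions of Section~\ref{mon} supply all of the global control.
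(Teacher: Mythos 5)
Your proposal is correct and follows essentially the same route as the paper's proof: relative compactness of ${\cal S}_\mathrm{Gen}$, the Monotonicity Principle applied to $M_\mathrm{F}$, $M_\mathrm{CS}$, $M_\mathrm{H}$ (the latter covering both $3/7<w<5/9$ and $w=5/9$) and $M_\mathrm{Het}$, and identification of the largest invariant subset of $\{(\ln M_*)'=0\}$ with the pertinent sink---you merely spell out the invariant-subset bookkeeping that the paper compresses into the assertion that it is ``virtually identical'' to the non-tilted analysis of \cite{hewetal01}. One caveat: your closing claim that connectedness alone forces $\omega({\bf x})$ to a single point of $\mathrm{HL}$ is unjustified (an $\omega$-limit set can be a nontrivial continuum of equilibria); the paper obtains pointwise convergence only in the subsequent remark via transversal hyperbolicity and the reduction theorem of \cite{cra91}, but since the theorem itself requires no more than $\omega({\bf x})\subseteq\mathrm{HL}$, this does not affect your argument.
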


The proof makes use of the \emph{Monotonicity
Principle}~\cite{waiell97,lebetal95}, which gives information
about the global asymptotic behavior of solutions of a
dynamical system. It is stated as follows: Let $\phi_\tau$ be a
flow on $\mathbb{R}^n$ with ${\bf X}$ being an invariant set.
Furthermore, let $M$ be a ${\mathcal C}^1$ function $M:{\bf X}
\rightarrow \mathbb{R}$.
%whose range is the interval $(a,b)$,
%where $a \in \mathbb{R} \cup \{-\infty \}$, $b \in \mathbb{R} \cup \{+\infty \}$
%and $a<b$.
Then if $M$ is increasing on orbits, then for all $x\in X$
\begin{equation}
\omega(x) \subseteq
\{{\bf s} \in \overline{X}\backslash X\:|\: \lim\limits_{{\bf y}\rightarrow {\bf s}} M({\bf y}) \neq
\inf\limits_{X} M \}\, .
\end{equation}
\begin{proof}
We make use that ${\cal S}_\mathrm{Gen}$ is a relatively
compact set and hence that every orbit in ${\cal
S}_\mathrm{Gen}$ has an $\omega$-limit point in $\overline{\cal
S}_\mathrm{Gen}$, moreover, the $\omega$-limit set of every
orbit in ${\cal S}_\mathrm{Gen}$ must be an invariant set. In
every case $\inf\limits_{{\cal S}_\mathrm{Gen}} M_*=0$, and
hence we only have to investigate the set where $(\ln
M_*)^\prime =0$. It turns out that in all cases the invariant
set associated with $(\ln M_*)^\prime =0$ is precisely the
pertinent fixed point(s), which thus is the $\omega$-limit of
every orbit in ${\cal S}_\mathrm{Gen}$. Hence the proof, and
the situations, is virtually identical to that of the
non-tilted Bianchi type II perfect fluid case given
in~\cite{hewetal01} on p. 151. For the cases $-1 < w \leq
-1/3$, $-1/3 < w \leq 3/7$, $3/7 < w < 5/9$ and $w = 5/9$, $5/9
< w <1$, one uses $M_\mathrm{F}$, $M_\mathrm{CS}$,
$M_\mathrm{H}$, and $M_\mathrm{Het}$, respectively.
\end{proof}
\begin{remark}
It follows that $\mathrm{F}, \mathrm{CS}, \mathrm{H}, \mathrm{Het}$
attract a 4-parameter set of solutions, as does the line $\mathrm{HL}$,
however, in this case it follows from the reduction theorem, see e.g.~\cite{cra91},
that each point on the line attracts a 3-parameter set since the line is
transversally hyperbolic.
\end{remark}
\begin{corollary}
For all ${\bf x}\in {\cal S}_\mathrm{OT}$
\[
\omega({\bf x}) = \begin{cases}\mathrm{F} &  -1 < w \leq -1/3 \\
\mathrm{CS} & -1/3 < w \leq 3/7 \\
\mathrm{H} & 3/7 < w < 1
\end{cases}\, .
\]
\end{corollary}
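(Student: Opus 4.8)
The plan is to run the proof of Theorem~\ref{thm} essentially unchanged, the single new ingredient being that on ${\cal S}_\mathrm{OT}$ the function $M_\mathrm{H}$ stays monotone across the \emph{whole} interval $3/7<w<1$ rather than stopping at $w=5/9$. First I would note that ${\cal S}_\mathrm{OT}$ is invariant: the equation $(\St^2)^\prime=-2(2-q-2\sqrt3\Sb)\St^2$ preserves $\St^2=0$, and the defining condition $(1-v^2)v^2\Sc^2\Omega\Omk\neq0$ singles out a relatively compact invariant subset of that slice, so that every orbit has a nonempty compact invariant $\omega$-limit set in $\overline{{\cal S}}_\mathrm{OT}$ and the Monotonicity Principle is available. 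For $-1<w\leq-1/3$ and $-1/3<w\leq3/7$ nothing changes from the Theorem: $M_\mathrm{F}=\Psi$ and $M_\mathrm{CS}$ remain increasing (putting $\St^2=0$ only deletes a manifestly nonnegative term from $(\ln M_\mathrm{CS})'$), each with infimum $0$ and derivative-zero set equal to $\mathrm{F}$ and $\mathrm{CS}$ respectively.

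The crux is the range $3/7<w<1$. On ${\cal S}_\mathrm{Gen}$ the derivative $(\ln M_\mathrm{H})'$ carries the term $\tfrac34(5-9w)(13w+3)\St^2$, whose coefficient flips sign at $w=5/9$ and caps the monotonicity at $w\leq5/9$. On ${\cal S}_\mathrm{OT}$ the factor $\St^2$ vanishes identically, leaving
\be
(\ln M_\mathrm{H})' = \phi_\mathrm{H}^{-1}\,
\frac{49\bigl(16\varphi_\mathrm{H} + 3(1-w)(3+13w)\bar{\varphi}_\mathrm{H}\bigr)}
{8+\tfrac38(91w-31)(7w-3)}\, .
\ee
I would then verify that the denominator is positive throughout $3/7<w<1$ (it equals $8$ at the left endpoint), that $3(1-w)(3+13w)>0$ there, and that $\varphi_\mathrm{H},\bar{\varphi}_\mathrm{H}\geq0$ as squares; hence $(\ln M_\mathrm{H})'\geq0$ with $\inf M_\mathrm{H}=0$, so $M_\mathrm{H}$ is monotone on the entire interval $3/7<w<1$. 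This is precisely why $\mathrm{H}$ absorbs the ranges that $\mathrm{HL}$ and $\mathrm{Het}$ occupied in the general case.

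Finally I would invoke the Monotonicity Principle: the $\omega$-limit must lie in the set where $(\ln M_\mathrm{H})'=0$, i.e. where $\varphi_\mathrm{H}=\bar{\varphi}_\mathrm{H}=0$. The hard part---the only genuinely new computation---is to confirm that this zero set reduces to the single fixed point $\mathrm{H}$ for every $w\in(3/7,1)$. Since $\varphi_\mathrm{H}$ and $\bar{\varphi}_\mathrm{H}$ are squares of two linear forms in $(\Sp-\Sigma_{+0})$ and $(\Sb-\Sb_0)$ whose coefficient matrix has determinant $-(\Sigma_{+0}^2+\Sb_0^2)\neq0$ (because $\Sigma_{+0}=\tfrac18(3w+1)\neq0$ at $\mathrm{H}$), their simultaneous vanishing forces $\Sp=\Sigma_{+0}$ and $\Sb=\Sb_0$; substituting these into the evolution equations~\eqref{dynsys} together with the constraints~\eqref{gauss} and~\eqref{codazzi} should then pin $\Sc^2,\Omk,v^2$ to their $\mathrm{H}$ values, so that the invariant zero set is exactly $\mathrm{H}$. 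Collecting the three cases yields the stated $\omega({\bf x})$ and completes the corollary.
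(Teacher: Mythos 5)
Your proposal is correct and follows essentially the same route as the paper: the paper's proof consists precisely of the observation that on ${\cal S}_\mathrm{OT}$ the term $\tfrac34(5-9w)(13w+3)\St^2$ drops out of $(\ln M_\mathrm{H})'$, so that $M_\mathrm{H}$ stays monotone for all $3/7<w<1$, after which the Monotonicity Principle argument of Theorem~\ref{thm} applies verbatim. Your additional checks (positivity of the denominator $8+\tfrac38(91w-31)(7w-3)$, nondegeneracy of the linear forms in $\varphi_\mathrm{H},\bar{\varphi}_\mathrm{H}$ via the determinant $-(\Sigma_{+0}^2+\Sb_0^2)\neq0$, and pinning the invariant zero set to $\mathrm{H}$) merely make explicit what the paper leaves implicit.
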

\begin{proof}
This follows immediately from the previous proof, in combination with noticing
the form for $(\ln M_\mathrm{H})^\prime$ when $\tilde{\Sigma}^2=0$.
\end{proof}
We hence have established that the local bifurcation diagram
\be
\mathrm{F} \xrightarrow{w=-1/3} \mathrm{CS} \xrightarrow{\,\ w=3/7\,\,}
\mathrm{H}
\ee
from~\cite{hewetal01} reflects the global features of the solution space of
${\cal S}_\mathrm{OT}$.

%-----------------------------------------------------------------
\section{Discussion}\label{disc}
%-----------------------------------------------------------------

In this paper we have found that there exists a collection of
monotonically increasing functions that completely determine
and describe the future asymptotics of tilted Bianchi type II
models. Furthermore, they all take the form
\be\label{monform2}
M_{*}=\phi_*^{-\beta}\,(\St^2)^{\alpha_1}(\check{\Sigma}^2)^{\alpha_2}
\Omk^{\alpha_3}(\Psi)^{\alpha_4}\, ,
\ee
where $\beta=2(\alpha_1+\alpha_2+\alpha_3+\alpha_4)$, and
$\alpha_1$, $\alpha_2$, $\alpha_3$, $\alpha_4$, $\beta\geq0$.
Moreover, the time derivatives of the monotone functions
$M_\mathrm{CS},\,M_\mathrm{H},\,M_\mathrm{Het}$, with nonzero
$\Sigma_{+0}^2 + \Sb_0^2$ (and hence non-zero $\varphi_* +
\bar{\varphi}_*$) all take the form
\be
(\ln M_*)' = \phi_*^{-1}(A_*\varphi_* + B_*\bar{\varphi}_* + \mathrm{Inv}_*)\, ,
\ee
where $A_*$ and $B_*$ are constants, and $\mathrm{Inv}_*$ a
function that vanishes on one of the invariant sets in Table
\ref{invariant sets}. Outside these invariant sets the function
$\mathrm{Inv}_*$ is positive only in a limited range of values
for $w$. In the $M_\mathrm{CS}$ case $\mathrm{Inv}_\mathrm{CS}$
is zero for the non-tilted subset for which $v^2=\Sc^2=0$; in
the $M_\mathrm{H}$ case $\mathrm{Inv}_\mathrm{H}$ is zero for
the orthogonally transitive subset for which $\St^2=0$;
finally, in the $M_\mathrm{Het}$ case
$\mathrm{Inv}_\mathrm{Het}$ is zero for the extreme tilt subset
for which $v^2=1$. Clearly it is easier to find these monotone
functions on these subsets first and then extend them to the
larger state space, as we did in order to find the key monotone
function $M_\mathrm{H}$ in Appendix~\ref{ham}, moreover, we
also first computed $M_\mathrm{Het}$ for the extreme tilt
subset $v^2=1$, although we did not use Hamiltonian methods in
this case.\footnote{If one wants to use Hamiltonian methods to
deal with extreme tilt, one first has to observe that these
models have the same equations as those associated with a
source that takes the form of a perfect fluid with a radiation
equation of state $w=1/3$, but with a null vector field
replacing the timelike 4-velocity.}

The importance of the hierarchical structure of Bianchi cosmology,
where we have systems with boundaries on boundaries, have been emphasized
before, see e.g.~\cite{heietal09,heiugg09a,heiugg09b} and references therein.
Here we see yet another context for this observation, which suggests
that one should first try to find monotone functions for subsets and
then attack the case one is really interested in. Hence one should first
identify subsets for a given state space and write
them on the form $Z_A=0$ and then, if there exists a locally future stable
fixed point on a given subset that admits subsets $Z_a=0$, attempt to find
monotone functions of the form
\be\label{monform3}
M_{*}=\phi_*^{-\beta}\prod_a\, Z_a^{\alpha_a}\, .
\ee

Is there a deeper reason for why monotone functions like this should exist?
The analysis of the non-tilted case in~\cite{heiugg10} suggests that the existence
of these monotone functions are related to the scale-automorphism group.
In the tilted case this group can be viewed as consisting of an off-diagonal
special automorphism group and a diagonal scale-automorphism group.
The off-diagonal special automorphism give rise to conserved momenta,
if the underlying symmetry is not broken by source terms, and
hence also to monotone functions, in a similar way as for the non-tilted
models, see~\cite{heiugg10}. We have not discussed such monotone functions
here since we did not need them for the future asymptotics, however,
they could be of help for the much more difficult past asymptotic behavior.
However, the off-diagonal automorphisms also have other dynamical consequences.
It is because of the off-diagonal automorphisms we only had diagonal
shear degrees of freedom in $\phi_*$ in the present case, and it is
because of this one would expect the present analysis to also be of relevance
for other tilted models, and for other sources. Moreover, in the
present case we have encountered a hierarchy of source subsets that
is typical. It is the increasing complexity of the source that breaks
the vacuum symmetry group associated with the scale-automorphism group,
creating a hierarchy of monotone functions associated with different
source subsets. This is also what one encounters in the non-tilted
case~\cite{heiugg10}, however, with an increasingly complex source
this phenomenon seems to be even more pronounced! Hence a systematic
attempt on the tilted models, or other sources, strongly suggests
a deeper investigation into the dynamical consequences for the
scale-automorphism group for the various relevant subsets. This is
quite ambitious task and we have therefore refrained from doing it
here; instead we have made use of the structures one can expect to arise
from such an analysis without deriving all the details that completely
determines all the monotone functions from the scale-automorphism
group (for a hint of how this can be done, see the complete analysis
of the class A non-tilted models from a Hamiltonian perspective
in~\cite{heiugg10}) in order to see if this is likely to be a fruitful
project. The answer seems to be yes.

\subsection*{Acknowledgments}
CU is supported by the Swedish Research Council.

% ----------------------------------------------------------------
\begin{appendix}
% ----------------------------------------------------------------

%-----------------------------------------------------------------
\section{Derivation of the dynamical system}\label{derdyn}
%-----------------------------------------------------------------

In spatially homogeneous cosmology the space-time is foliated by
a geodesically parallel family of spatially homogeneous slices
with a timelike unit normal vector $n^a$, see,
e.g.,~\cite{waiell97,ellmac69,jan01} and references therein.

%-----------------------------------------------------------------
\subsection{Perfect fluids}\label{perf}
%-----------------------------------------------------------------

Splitting the total stress-energy tensor $T_{ab}$ with respect
to $n^a$ yields:
\begin{subequations}\label{Tirreduc}
\begin{align}
T_{ab} &= \rho\,n_a\,n_b + 2q_{(a}\,n_{b)} + p\,h_{ab} + \pi_{ab}\:,\\
\rho &= n^a\,n^b\,T_{ab}\:,\qquad q_a = -h_a{}^b\,n^c\,T_{bc}\:,\qquad
p = \textfrac{1}{3}\,h^{ab}\,T_{ab}\:,\qquad \pi_{ab} = h_{\la
a}{}^c\,h_{b\ra}{}^d\,T_{cd}\, ,
\end{align}
\end{subequations}
where $h_{ab} = n_a n_b + g_{ab}$ and
$A_{\la ab \ra} = h_a{}^c h_b{}^d A_{cd} - \frac{1}{3}h_{ab}h^{cd}A_{cd}$;
$\rho, p$ is the total
energy density and total effective pressure, respectively,
measured in the rest space of $n^a$.
In this paper we consider a perfect fluid, which yields the
stress-energy tensor:
\be T^{ab} = (\tilde{\rho} + \tilde{p}) \tilde{u}^a\tilde{u}^b
+ \tilde{p} g^{ab}\, , \ee
where $\tilde{\rho}$ and $\tilde{p}$ are the energy density and
pressure, respectively, in the rest frame of the fluid, while
$\tilde{u}^a$ is its 4-velocity; throughout we assume that
$\tilde{\rho}\geq 0$. Making a 3+1 split with respect to $n^a$,
leads to
\bea \tilde{u}^a = \Gamma(n^a + v^a)\,; \qquad n_a v^a=0\, ,\qquad
\Gamma = (1-v^2)^{-1/2}\, , \eea
where $v^a$ is the three-velocity of the fluid, also known as the
tilt vector; this gives
\be\label{nonormpfrel} \tilde{\rho} = \Gamma^{-2}\,G_+^{-1}\,\rho\,
,\quad q^a = (1 + w) G_+^{-1}\, \rho\, v^a,\quad\,\,
p = w\rho + \textfrac{1}{3} (1 - 3w)q_a v^a\, ,\quad\,\,
\pi_{ab} = q_{\la a}v_{b\ra}\, , \ee
where $G_\pm  =  1 \pm w\, v^2$, $w =\tilde{p}/\tilde{\rho}$.

%-----------------------------------------------------------------
\subsection{Orthonormal frame equations}\label{onframe}
%-----------------------------------------------------------------

In Bianchi cosmology the metric can be written as
\be\label{ds2} ^4{\bf g} = -N^2(x^0) \,dx^0 \otimes dx^0 +
g_{\hat{i}\hat{j}}(x^0)\:(\bom^{\hat{i}} +
N^{\hat{i}}\,dx^0)\otimes (\bom^{\hat{j}} + N^{\hat{j}}\,dx^0)
\qquad (\hat{i},\hat{j}=1,2,3)\, , \ee
where $\{\bom^{\hat{i}}\}$ is a left-invariant\footnote{See
\cite{waiell97} p. 38 on the meaning of group invariant frames
and their relation to orthonormal frames.} co-frame on $G$ dual
to a left-invariant spatial frame $\{\vector{e}_{\hat{i}}\}$.
This frame is a basis of the Lie algebra with structure
constants $C^{\hat{i}}{}_{\hat{j}\hat{k}}$, i.e.,
\be [\vector{e}_{\hat{i}},\vector{e}_{\hat{j}}] =
C^{\hat{k}}{}_{\hat{i}\hat{j}}\,\vector{e}_{\hat{k}} =
(\epsilon_{\hat{i}\hat{j}\hat{m}}\,n^{\hat{m}\hat{k}} +
2a_{[\hat{i}}\,\delta_{\hat{j}]}{}^{\hat{k}})\,\vector{e}_{\hat{k}}
\qquad \text{or, equivalently,} \qquad d\bom^{\hat{i}} =
-\textfrac{1}{2}C^{\hat{i}}{}_{\hat{j}\hat{k}}\,\bom^{\hat{j}}\wedge
\bom^{\hat{k}}\, , \ee
where we have decomposed the structure constants
$C^{\hat{i}}{}_{\hat{j}\hat{k}}$ as~\cite{estetal68}:
\be\label{Cdecomp} C^{\hat{i}}{}_{\hat{j}\hat{k}} =
\epsilon_{\hat{j}\hat{k}\hat{m}}\,n^{\hat{m}\hat{i}} +
a_{\hat{n}}\,\delta^{\hat{n}\hat{i}}_{\hat{j}\hat{k}}\, ,\qquad
a_{\hat{i}} = \textfrac{1}{2} C^{\hat{j}}{}_{\hat{i}\hat{j}}\, ,
\ee
where $a_{\hat{i}}\,a_{\hat{j}} =
\textfrac{1}{2}h\epsilon_{\hat{i}\hat{k}\hat{m}}\epsilon_{\hat{j}\hat{p}\hat{n}}\,
n^{\hat{k}\hat{p}}n^{\hat{m}\hat{n}}$, $a^2 =
a_{\hat{i}}\,a^{\hat{i}} =
\textfrac{1}{2}h[(n^{\hat{i}}{}_{\hat{i}})^2 -
n^{\hat{i}}{}_{\hat{j}}\,n^{\hat{j}}{}_{\hat{i}}]$, where $h$
is a constant group invariant that is unaffected by frame
choices. The Bianchi models are divided into two main classes:
The class A models for which $a_{\hat{i}}=0$, and the class B
models for which $a_{\hat{i}}\neq 0$.

For simplicity we will set the shift vector to zero, i.e.,
$N^{\hat{i}}=0$. This leads to the orthonormal frame $^4{\bf g}
= -(\bom^0)^2 + (\bom^1)^2 + (\bom^2)^2 + (\bom^3)^2$, and a
frame $\{\vector{e}_\alpha\}$ that is dual to
$\{\bom^\alpha\}$, $\alpha=1,2,3$,
\be\label{framerel} \vector{e}_0 =
N^{-1}\frac{\partial}{\partial x^0}\,, \qquad \vector{e}_\alpha =
e_\alpha{}^{\hat{i}}(x^0)\,\hat{\vector{e}}_{\hat{i}} =
e_\alpha{}^{\hat{i}}(x^0)\,e_{\hat{i}}{}^i\,\frac{\partial}{\partial
x^i} \quad\text{where} \quad \delta^{\alpha\beta}
e_\alpha{}^{\hat{i}}(x^0)\,e_\beta{}^{\hat{j}}(x^0) =
g^{\hat{i}\hat{j}}(x^0)\, ,\ee
where $g^{\hat{i}\hat{j}}(x^0)$ is the left-invariant
contravariant spatial metric associated with
$g_{\hat{i}\hat{j}}(x^0)$ ($e_{\hat{i}}{}^i =
e_{\hat{i}}{}^i(x^1,x^2,x^3)$).

Since the unit normal to the spatial symmetry surfaces ${\bf n}
= \vector{e}_0$ by definition is hypersurface forming, and
since it is the tangent to a geodesic congruence due to
spatial homogeneity, we obtain
\begin{subequations}\label{comm}
\begin{align} \label{dcomts0} [\,\vece_{0}, \vece_{\alpha}\,] &=
C^\beta{}_{0\alpha}\,\vece_{\beta} = f_{\alpha}{}^{\beta}\,\vece_{\beta} = -
[\,H\,\delta_{\alpha}{}^{\beta} + \sigma_{\alpha}{}^{\beta} +
\epsilon_{\alpha}{}^{\beta}{}_{\gamma}\,\Omega^{\gamma}\,]
\vece_{\beta}\, , \\
\label{dcomtsa} [\,\vece_{\alpha}, \vece_{\beta}\,] &=
C^\gamma{}_{\alpha\beta}\,\vece_{\gamma} =
\left[2a_{[\alpha}\,\delta_{\beta]}{}^{\gamma} +
\epsilon_{\alpha\beta\delta}\,n^{\delta\gamma}\right]\vece_{\gamma}\, ,
\end{align}
\end{subequations}
where $H$ is the Hubble variable; $\sigma_{\alpha\beta}$ is the
shear associated with ${\bf n}$; $\Omega^\alpha$ is the Fermi
rotation which describes how the spatial triad rotates with
respect to a gyroscopically fixed so-called Fermi
frame.\footnote{The sign in the definition of $\Omega^\alpha$
is the same as in~\cite{ellels99,sanugg10}, but opposite of
that in~\cite{waiell97}.} The relations~\eqref{framerel}
and~\eqref{comm} yield
\begin{subequations}\label{onquant}
\begin{alignat}{3}
\label{H} H &=
\textfrac{1}{3}\left(g^{-\frac{1}{2}}\right)\vece_{0}\left(g^{\frac{1}{2}}\right) =
-\textfrac{1}{3}e^\alpha{}_{\hat{i}}\,\vece_0\left(e_\alpha{}^{\hat{i}}\right)\, ,
& \qquad
\sigma_{\alpha\beta} &= - e^\gamma{}_{\hat{i}}\,\delta_{\gamma\la\alpha}\,
\vece_0\left(e_{\beta\ra}{}^{\hat{i}}\right)\, ,&\qquad \\
\Omega^\alpha &=
\textfrac{1}{2}\epsilon^\alpha{}_\beta{}^\gamma\,e^\beta{}_{\hat{i}}\,
\vece_0\left(e_\gamma{}^{\hat{i}}\right)\, ,
&\qquad
n^{\alpha\beta} &=
g^{-\frac{1}{2}}e^\alpha{}_{\hat{i}}\,e^\beta{}_{\hat{j}}\,n^{\hat{i}\hat{j}}\, ,
&\qquad
a_\alpha &= e_\alpha{}^{\hat{i}}\,a_{\hat{i}}\, ,\label{n}
\end{alignat}
\end{subequations}
where %$\la..\ra$ stands for the trace-free part of a symmetric
%spatial tensor (i.e. $A_{\la \alpha\beta \ra}=A_{\alpha\beta} -
%\frac{1}{3}\delta_{\alpha\beta}\,A^\gamma{}_\gamma$);
$g$ is the determinant of the spatial metric $g_{\hat{i}\hat{j}}$, and
$e^\alpha{}_{\hat{i}}$ is the inverse of
$e_\alpha{}^{\hat{i}}$, i.e.,
\be g=\det(g_{\hat{i}\hat{j}}) = (\det(e^\alpha{}_{\hat{i}}))^2
= (\det(e_\alpha{}^{\hat{i}}))^{-2} ,\qquad
e^\alpha{}_{\hat{i}}\,e_\beta{}^{\hat{i}}
=\delta^\alpha{}_\beta\, . \ee

The matter conservation equation $\bna_a T^{ab} = 0$ for a perfect fluid
with a linear equation of state yield
\begin{subequations}
\begin{align}
({\rm ln}\,\rho)\,\dot{} &= (1+w) G_+^{-1}[-3H +
f_{\alpha\beta}\, v^\alpha v^\beta + 2a_{\alpha}v^{\alpha}],\label{rhoperf}\\
\dot{v} &= G_-^{-1}\,(1-v^2)\,
\left[3w H + f_{\alpha\beta}\,c^\alpha c^\beta - 2w\,a_\beta\,c^\beta\,v \right] v ,\label{pec}\\
\dot{c}_{\alpha} &= [\delta_\alpha{}^\beta - c_\alpha c^\beta]
[f_\beta{}^\gamma\,c_{\gamma} - v (a_\beta +
\epsilon_{\beta\gamma\delta}\,
n^{\delta\zeta}\,c_{\zeta}\,c^{\gamma})]\, , \label{cpec}
\end{align}
\end{subequations}
where $a_\beta + \epsilon_{\beta\gamma\delta}\,
n^{\delta\zeta}\,c_{\zeta}\,c^{\gamma} =
C^\zeta{}_{\beta\gamma}\,c_{\zeta}\,c^{\gamma}$, and where
instead of the 3-velocity $v_\alpha$ we have found it
convenient to introduce $v = \sqrt{v_\alpha v^\alpha}\geq 0$
and the unit vector $c_\alpha=v_\alpha/v$ as variables. To
obtain a more compact notation, we also introduced $\dot{f} =
(f)\,\dot{} = \vece_0 f = N^{-1}d f/d x^0 = df/d t$, where $t$
is the clock time associated with the normal congruence of the
spatial symmetry surfaces (i.e. $N=1$ for this
parameterization).

It is of interest to consider the particle number density
$\tilde{n}$ and the chemical potential $\tilde{\mu}$, which,
for a linear equation of state, can be defined as (see~\cite{jan01,jan83}
and references therein)
\be\label{nmudef} \tilde{\rho} = \tilde{n}^{1+w}, \qquad
\tilde{\mu} = (1+w)\,\tilde{n}^w\, .
\ee
Defining
\be\label{ldef} l = \tilde{n}\,g^{\frac{1}{2}}\,\Gamma\, , \ee
yields the evolution equation $({\rm ln}\, l)\,\dot{} =
2a_{\alpha}v^{\alpha}  = 2(a_{\alpha}c^{\alpha})\,v$, and hence
$l$ is a constant of the motion whenever
$a_{\alpha}v^{\alpha}=0$, e.g. for the class A perfect fluid
models. Another quantity of interest is Taub's spatial
circulation 1-form~\cite{taub69,elsugg97} $t_a = \tilde{\mu}\,
\tilde{u}_a$, whose spatial components can be written as
$t_\alpha = \tilde{\mu}\,\Gamma\, v_\alpha =
\tilde{\mu}\,\Gamma\, v\,c_\alpha$, with the norm
$\tilde{\mu}\,\Gamma\, v$, which satisfies $(\ln\,
\tilde{\mu}\,\Gamma\, v)\,\dot{} =
f_{\alpha\beta}\,c^\alpha\,c^\beta$, which, together
with~\eqref{cpec} yields $\dot{t}_\alpha =
\tilde{\mu}\,\Gamma\, v\,(f_{\alpha\beta} -
v\,C^\gamma{}_{\alpha\beta}\,c_\gamma)\,c^\beta$. This in turn
leads to that $t_{\hat{i}} = e^\alpha{}_{\hat{i}}\,t_\alpha$,
where
%$\vece_0\left(e_\alpha{}^{\hat{i}}\right) =
%f_\alpha{}^\beta\,e_\beta{}^{\hat{i}} ,\qquad
$\vece_0\left(e^\alpha{}_{\hat{i}}\right) = -
f_\beta{}^\alpha\,e^\beta{}_{\hat{i}}$, obeys the equation
\be\label{tcons} \dot{t}_{\hat{i}} =-(\tilde{\mu}\,\Gamma)^{-1}\,
C^{\hat{k}}{}_{\hat{i}\hat{j}}\,t_{\hat{k}}\,t^{\hat{j}}\, . \ee
%

%-----------------------------------------------------------------
\subsection{The Hubble-normalized dynamical systems approach}\label{hubdynsys}
%-----------------------------------------------------------------

In the conformal Hubble normalized approach one factors out the Hubble
variable $H$ by means of a conformal transformation which
yields dimensionless quantities~\cite{rohugg05,sanugg10}. In the
spatially homogeneous case this amounts to the following:
\be (\Sigma_{\alpha\beta}, R^\alpha, N^{\alpha\beta},A_\alpha)
= \frac{1}{H}(\sigma_{\alpha\beta}, \Omega^\alpha,
n^{\alpha\beta},a_\alpha)\, ,\qquad (\Omega, P, Q_\alpha,
\Pi_{\alpha\beta}) =
\frac{1}{3H^2}(\rho,p,q_\alpha,\pi_{\alpha\beta})\, ,\ee
where we have chosen to normalize the stress-energy quantities
with $3H^2$ rather than $H^2$ in order to conform with the
usual definition of $\Omega$; in the perfect fluid case this
leads to that
\be\label{pfrel} Q^\alpha = (1 + w)G_+^{-1}\, v\,\Omega\,
c^\alpha\, ,\quad Q = (1 + w)G_+^{-1}\,v\,\Omega\, ,\quad P =
w\Omega + \textfrac{1}{3}(1 - 3w)v\,Q\, ,\quad\,\,
\Pi_{\alpha\beta} = v\,Q\, c_{\la\alpha}c_{\beta\ra}\, . \ee

In addition to this we choose a new dimensionless time variable
$\tau$ by means of the lapse choice $N=H^{-1}$.
Since $H$ is the only variable with dimension, its evolution
equation decouples from the remaining equations for dimensional
reasons:
\be\label{Heq} H^\prime = -(1+q)H\, ;\qquad\qquad  q =
2\Sigma^{2} + \textfrac{1}{2}(\Om+3P)\, ,\qquad
\Sigma^2=\textfrac{1}{6}\Sigma_{\alpha\beta}\Sigma^{\alpha\beta}\,,
\ee
where a prime denotes $d/d\tau$ and where $q$ is the
deceleration parameter, obtained by means of one of Einstein's
equations $G_{ab}=T_{ab}$ ---the Raychaudhuri equation (we use
units $c=1$ and $8\pi G=1$, where $c$ is the speed of light and
$G$ is Newton's gravitational constant); in the perfect fluid
case we obtain
\be\label{decpar} q = 2\Sigma^2 + \textfrac{1}{2}(\Om+3P) = 2\Sigma^2 +
\textfrac{1}{2}G_+^{-1}[1+3w +(1-w)v^2]\Omega\, .\ee

The remaining Einstein field equations together with the
Jacobi identities yield the following set of equations, which
we divide into evolution equations and constraints:

\vspace*{4mm} \noindent
\textit{Evolution equations}
\begin{subequations}\label{devoleq}
\begin{align}
\Sigma_{\alpha\beta}^\prime &= -(2-q)\Sigma_{\alpha\beta} -
2\epsilon^{\gamma\delta}{}_{\la \alpha}\,\Sigma_{\beta\ra\gamma}\,R_\delta
- {}^{3}\!{\cal R}_{\la\alpha\beta\ra} + 3\Pi_{\alpha\beta}\, ,\label{dsigevol}\\
(N^{\alpha\beta})^\prime &=
(3q\delta_\gamma{}^{(\alpha} - 2F_\gamma{}^{(\alpha})N^{\beta )\gamma}\, ,\label{dnevol}\\
A_{\alpha}^\prime &= F_\alpha{}^\beta\,A_\beta\, ,\label{daevol}\\
\Om^\prime &=  (2q-1)\,\Om - 3P +
2A_{\alpha}\,Q^{\alpha} - \Sigma_{\alpha\beta}\Pi^{\alpha\beta}\, ,\label{domevol}\\
v^\prime &= -G_-^{-1}\,(1-v^2)\,
\left[1-3w + \Sigma_{\alpha\beta}\,c^\alpha c^\beta + 2w\,(A_\beta\,c^\beta)\,v \right] v\, ,
\label{dpec}\\
c_{\alpha}^\prime &=
[\delta_\alpha{}^\beta - c_\alpha c^\beta]
[F_\beta{}^\gamma\,c_{\gamma} - v (A_\beta +
\epsilon_{\beta}{}^\gamma{}_\delta\,
N^{\delta\zeta}\,c_{\zeta}\,c_{\gamma})]\, . \label{dcpec}
\end{align}
\end{subequations}

\noindent \textit{Constraint equations}
\begin{subequations}
\begin{align}
0 &= 1 - \Sigma^2 - \Omega_{\mathrm{k}} - \Omega\, ,\label{dGauss}\\
0 &= (3\delta_\alpha{}^\gamma\,A_\beta + \epsilon_\alpha{}^{\delta\gamma}
\,N_{\delta\beta})\,\Sigma^\beta{}_\gamma - 3Q_\alpha\, ,\label{dCodazzi}\\
0 &= A_\beta\, N^\beta{}_\alpha\, ,\label{dJacobi}
\end{align}
\end{subequations}
where
$\Sigma^2=\frac{1}{6}\Sigma_{\alpha\beta}\Sigma^{\alpha\beta}$,
$A^2=A_\alpha A^\alpha$,
$F_{\alpha}{}^{\beta} = q\,\delta_{\alpha}{}^{\beta} -
\Sigma_{\alpha}{}^{\beta} -
\epsilon_{\alpha}{}^{\beta}{}_{\gamma}\,R^{\gamma}$,
and
\be\label{threecurv} {}^{3}\!{\cal R}_{\la\alpha\beta\ra} =
{\cal B}_{\la \alpha\beta \ra} -
2\epsilon^{\gamma\delta}{}_{\la
\alpha}\,N_{\beta\ra\gamma}\,A_\delta\, ;\quad
\Omega_{\mathrm{k}} = \textfrac{1}{12}{\cal B}^\alpha{}_\alpha
+ A^2\, ;\quad {\cal B}_{\alpha\beta} = 2
N_{\alpha\gamma}\,N^\gamma{}_\beta -
N^\gamma{}_\gamma\,N_{\alpha\beta}\, . \ee
The Gauss constraint~\eqref{dGauss} and the Codazzi
constraint~\eqref{dCodazzi} will figure prominently throughout.

It is of interest to note that
\be Q^\prime = -[2-q -
F_{\alpha\beta}\,c^\alpha\,c^\beta +
2(A_\alpha\,c^\alpha)\,v]\,Q\, . \label{dlqmalpha} \ee
Another quantity of interest, intimately connected with $l$, is defined
via $\tilde{n}\,\Gamma=g^{-1/2}\,l$; raising the r.h.s. with
$1+w$ and normalizing with $H$ yields the quantity\footnote{This quantity has appeared
before in the literature, e.g. in~\cite{hewetal01}, where
$\Gamma^{-(1-w)}\,G_+^{-1}$ has been denoted by $\beta$.}
\be \Psi = \Gamma^{-(1-w)}\,G_+^{-1}\,\Omega\, ,\qquad
(\ln \Psi)^\prime = 2q - (1+3w) + 2(1+w)(A_\alpha\,c^\alpha)\,v\, .\ee
%

%-----------------------------------------------------------------
\subsection{Bianchi type II}\label{BII}
%-----------------------------------------------------------------

For the Bianchi type II models we have $A_\alpha=0$, and in
addition we can choose a spatial frame $\vece_\alpha$ to be an
eigenframe of the matrix $N_{\alpha\beta}$, with
$N_{11} \neq 0$, while otherwise $N_{\alpha\beta}=0$.
This leads to that eq.~\eqref{dnevol} yields $R_2 =
\Sigma_{31}$, $R_3 = -\Sigma_{12}$, and that the Codazzi
constraint~\eqref{dCodazzi} implies
\be v_1=0=c_1\, .\label{vcond}\ee
Inserting the conditions of eq.~\eqref{vcond} into
eq.~\eqref{dcpec} gives the following relation:
\be 0 = \Sigma_{12}\,c_2 + \Sigma_{31}\,c_3 = -R_3\,c_2 +
R_2\,c_3 \qquad \Leftrightarrow \qquad \epsilon_{AB}\,R^A\,c^B
= 0\,  ,\ee
where $A,B = 2,3$ and $\epsilon_{AB}$ is the two-dimensional
permutation that has $\epsilon_{23}=1$ (hence $c_A\,c^A=1$). It
follows from $\epsilon_{AB}\,R^A\,c^B = 0$ that $R_A \propto
c_A \propto Q_A$, where the last relation holds when
$\Omega\neq 0$; hence $\Sigma_{12}$ and $\Sigma_{31}$ are
linearly dependent and can be replaced by a single variable.

We have the freedom to rotate in the 23-plane, which is
expressed in the field equations as the freedom to choose
$R_1$. To obtain a set of variables that are invariant under
such rotations we introduce the following new shear variables
\begin{subequations}
\begin{alignat}{2}
\Sigma_+ &= \tfrac{1}{2}\Sigma_A{}^A = -\tfrac{1}{2}\Sigma_{11}\, ,
&\qquad
\bar{\Sigma} &=
\tfrac{1}{\sqrt{3}}(\Sigma_{AB} - \Sigma_+\delta_{AB})
(c^{A}c^{B} - \tfrac{1}{2}\delta^{AB})\, ,\\
\tilde{\Sigma} & =\tfrac{1}{\sqrt{3}}(\Sigma_{AB} - \Sigma_+\delta_{AB})
\epsilon^B{}_C(c^{A}c^{C} - \tfrac{1}{2}\delta^{AC})\, ,
&\qquad
\check{\Sigma}^2 &= \tfrac{1}{3}\left(\Sigma_{12}^2 + \Sigma_{31}^2\right)\, .
\end{alignat}
\end{subequations}
Hewitt et al.~\cite{hewetal01} make use of the freedom to
rotate in the 23-plane to set $c_2=0$, which yields that
$c_3=1$ and $R_2=0=\Sigma_{31}$. This leads to a correspondence
between the variables $\Sigma_-,\,\Sigma_1,\,\Sigma_3$
in~\cite{hewetal01} to our variables, when setting $c_2=0$,
according to $\Sb=-\Sigma_-$, $\Sc^2=\Sigma_3^2$,
$\St^2=\Sigma_1^2$.

Because of the existence of discrete symmetries one can
simplify the analysis, e.g. the eigenvalue analysis, by
introducing the following state vector $ {\bf S} =
(\Sigma_+,\bar{\Sigma},\tilde{\Sigma}^2,\check{\Sigma}^2,\Omk,v^2)$,
where $\Omk = N_{11}^2/12$, and where $\Omega$ can be obtained
in terms of ${\bf S}$ via the Gauss constraint~\eqref{dGauss}.
This leads to the dynamical system given in
Section~\ref{sec:BII}.

%-----------------------------------------------------------------
\section{Hamiltonian considerations and derivation of monotone functions}\label{ham}
%-----------------------------------------------------------------

%-----------------------------------------------------------------
\subsection{Hamiltonian considerations}\label{hamconsider}
%-----------------------------------------------------------------

The scalar Hamiltonian is given by
\be\label{hamfkn} \Htilde = 2N\,g^{\frac{1}{2}}n^a\,n^b(G_{ab}
- T_{ab}) = 2\nt\,g\,n^a\,n^b(G_{ab} - T_{ab}) = \nt\mathscr{H}
= \nt(T + U_\mathrm{g} + U_\mathrm{f})\, ,\ee
where we have defined $\nt = N g^{-1/2}$, and where
$(T,\,U_\mathrm{g},U_\mathrm{f}) = 6\,g\,H^2(-1 +
\Sigma^2,\,\Omega_\mathrm{k},\,\Omega)$. By means
of~\eqref{nonormpfrel}, \eqref{nmudef}, and~\eqref{ldef}, this
leads to that
\be \label{Uf} U_\mathrm{f} = 2 g \rho = 2
l^{1+w}\,g^{(1-w)/2}\,\Gamma^{1-w}G_+\, , \ee
while $T$ and $U_\mathrm{g}$ depends on the model and the
metric representation.

In~\cite{uggetal95} it was shown that the tilted orthogonally
transitive Bianchi models exhibit a so-called timelike
homothetic Jacobi symmetry. It was later realized that such
symmetries are related to the existence of monotonic functions
(Uggla in ch. 10 in~\cite{waiell97}, and~\cite{heiugg10}).
Unfortunately the analysis in~\cite{uggetal95} is quite
cumbersome and we will therefore make a new derivation of the
`homothetic structure' and from this derive a monotone
function. To do so, we need to connect the spatially
homogeneous frame in~\eqref{ds2} with the orthonormal frame.
This is done in two steps: (i) diagonalization by means of the
off-diagonal special automorphism group, (ii) normalization by
means of diagonal scaling. We hence write
$e_\alpha{}^{\hat{i}}$ in the transformation~\eqref{framerel},
i.e., $\vector{e}_\alpha =
e_\alpha{}^{\hat{i}}(x^0)\,\hat{\vector{e}}_{\hat{i}}$, as
$e_\alpha{}^{\hat{i}} =
(D^{-1})_\alpha{}^{\tilde{j}}(S^{-1})_{\tilde{j}}{}^{\hat{i}}$,
or $e^\alpha{}_{\hat{i}}
=D^\alpha{}_{\tilde{j}}\,S^{\tilde{j}}{}_{\hat{i}}$, where
$S^{\tilde{j}}{}_{\hat{i}}$, since it is assumed to be a
special automorphism transformation, leaves $a_{\hat{i}}$ and
$n^{\hat{i}\hat{j}}$ unaffected. In addition we define the new
metric variables $\beta^1, \beta^2, \beta^3$ via the matrix
$D^\alpha{}_{\tilde{j}}$ so that
\begin{equation}\label{diagmatrix}
(D^{-1})_\alpha{}^{\tilde{j}} =
\begin{pmatrix}
\exp(-\beta^1) & 0 & 0 \\
0 & \exp(-\beta^2) & 0 \\
0 & 0 & \exp(-\beta^3)
\end{pmatrix}
\:,
\end{equation}
where $\beta^\alpha = \beta^\alpha(x^0)$; hence
$g^{1/2} = \exp(\beta^1 + \beta^2 + \beta^3)$.

To obtain the Hamiltonian for the tilted orthogonally
transitive Bianchi models we choose a spatially homogeneous
frame so that the line element can be written as~\eqref{ds2}
with all structure constants being zero except
$n^{\hat{1}\hat{1}}=\hat{n}_1$. In the orthogonally transitive
case we can specify the spatially homogeneous frame so that
$g_{\hat{i}\hat{j}}$ in~\eqref{ds2} have one off-diagonal
component, $g_{\hat{1}\hat{2}}$, and so that the perfect fluid
velocity has a single non-zero component, $v_{\hat{3}}$. Hence
we follow~\cite{uggetal95} and write\footnote{There is a
typographical error in eq. (2.61) in~\cite{uggetal95}; the
exponent should be $-1$ and not $-1/2$ of $(\hat{m}^{(3)})$ in
the expression for $s_3$.}
\begin{equation}
S^\alpha{}_{\tilde{j}} =
\begin{pmatrix}
1 & -\sqrt{2}\hat{n}_1\theta^3(x^0) & 0 \\
0 & 1 & 0 \\
0 & 0 & 1
\end{pmatrix}
\:.
\end{equation}
It follows from~\eqref{n} that
\be\label{nbeta}
n^{11} = \exp(\beta^1 - \beta^2 - \beta^3)\hat{n}_1\, .
\ee
Due to that $\theta_3$ is associated with the off-diagonal
special automorphism group, the associated momentum, which is
proportional to $\sigma_{12}$, is conserved~\cite{uggetal95}.
Furthermore, note that~\eqref{tcons} yields the constant of the
motion $t_{\hat{3}}=\hat{t}_3=\mathrm{const}$. This is
consistent with the Codazzi constraint, also known as the
Hamiltonian momentum constraint, which linearly relates these
two constants to each other. The constants $\hat{n}_1$, $l$,
$\hat{t}_3$ allows us to write $T + U_\mathrm{g} +
U_\mathrm{f}$ in~\eqref{hamfkn} so that
\be
\mathscr{H} = T + U_\mathrm{g} + U_\mathrm{f} =
T_\mathrm{d} + U_\mathrm{c} + U_\mathrm{g} + U_\mathrm{f}\, ,
\ee
where $U_\mathrm{c}$ is the so-called centrifugal potential, which is proportional
to the $\Sigma_{12}^2$ term in $\Sigma^2$, where, furthermore,
$U_\mathrm{c}$, $U_\mathrm{g}$, and $U_\mathrm{f}$ are all expressible in terms
of $\beta^\alpha$, and no other time dependent quantities.
Expressing $\nt T_\mathrm{d}$ in terms of the $\dot{\beta}^\alpha$ or the
associated momenta $\pi_\alpha$, see~\cite{heiugg10} and also~\cite{dametal03},
by means of~\eqref{onquant} and the transformations in this subsection, leads to
\be
\nt T_\mathrm{d} = 2\nt^{-1}\,\cg_{{\gamma} \delta}\dot{\beta}^\gamma\dot{\beta}^\delta =
\textfrac{1}{4}\nt \,\cg^{{\gamma} \delta}\pi_{\gamma}\pi_{\delta}\, ,
\ee
where $\cg_{{\gamma} \delta}$ is known as the minisuperspace metric for the
diagonal degrees of freedom, which, together with its inverse $\cg^{\alpha\beta}$
is given by
\begin{equation}\label{minisupmetric}
\cg_{{\alpha} \beta} =
\begin{pmatrix}
0 & -1 & -1 \\
-1 & 0 & -1 \\
-1 & -1 & 0
\end{pmatrix}
;\qquad\quad
\cg^{{\alpha} \beta} = \textfrac{1}{2}\begin{pmatrix}
1 & -1 & -1 \\
-1 & 1 & -1 \\
-1 & -1 & 1
\end{pmatrix}\:.
\end{equation}

The centrifugal potential
$U_\mathrm{c}= 2\,g\,H^2\,\Sigma_{12}^2 = 2\,g\,\sigma_{12}^2$
re-expressed via the Codazzi constraint in terms of $\hat{t}_3$
yields
\be
U_\mathrm{c} \propto \exp[-2(\beta^1 - \beta^2)]\, .
\ee
By means of $U_\mathrm{g} = 6\,g\,H^2\,\Omega_\mathrm{k}$
and~\eqref{nbeta} we find that
\be
U_\mathrm{g}\propto \exp(4\beta^1)\, .
\ee

We can write $\tilde{u}^a\,\tilde{u}_a=-1$ as $0=1 - \Gamma^2 +
\tilde{\mu}^{-2}\,t_\alpha\,t^\alpha$, and by defining
\be F = (1+w)^{-2}\,l^{-2w}\,g^{w}\,t_\alpha\,t^\alpha =
(1+w)^{-2}\,l^{-2w}\,g^{w}\,g^{\hat{i}\hat{j}}\,t_{\hat{i}}\,t_{\hat{j}}\, ,
\ee
we obtain
\be\label{FGamma} 0 = 1 - \Gamma^2 + F\,\Gamma^{2w}\, ,
\ee
which allows one to, in general, implicitly express $\Gamma^2$
in terms of $F$, i.e., $\Gamma^2 = \Gamma^2(F)$. In the present case
we find that
$F=(1+w)^{-2}\,l^{-2w}\,g^{w}\,g^{\hat{3}\hat{3}}\,t_{\hat{3}}\,t_{\hat{3}}$,
which yields
\be \label{F} F =
(1+w)^{-2}\,l^{-2w}\,\hat{t}_3^2\,\exp[w(\beta^1 + \beta^2) -
(1-w)\beta^3]\, . \ee
It follows that
\be U_\mathrm{f} \propto \exp[(1-w)(\beta^1 + \beta^2 +
\beta^3)]\phi . \ee
where $\phi := \Gamma^{1-w}G_+$ is a function of the particular
combination $w(\beta^1 + \beta^2) - (1-w)\beta^3$ only.
%-----------------------------------------------------------------
\subsection{Derivation of monotone functions}\label{dermon}
%-----------------------------------------------------------------

Based on the Hamiltonian for the diagonal degrees of freedom
for the orthogonally transitive type II case we in this
subsection derive a monotonic function that is of key
importance for understanding the dynamics. In ch.~10
in~\cite{waiell97} and in~\cite{heiugg10} it is shown that
monotone functions are associated with `homothetic' symmetries
of the potential, i.e., we require that there exists a vector
${\bf c} = c^\alpha\partial_{\beta^\alpha}$ such that ${\bf c}
U = {\bf c}(U_\mathrm{c} + U_\mathrm{g} + U_\mathrm{f}) =
r\,U$, where $r$ is a constant. For this to be possible we
require (i) that ${\bf c}\phi = 0$, and hence that
\be {\bf c}(w(\beta^1 + \beta^2) - (1-w)\beta^3) = w(c^1 + c^2)
- (1-w)c^3 = 0\, , \ee
and (ii) ${\bf c}U_\mathrm{c} = r\, U_\mathrm{c}$,
${\bf c}U_\mathrm{g} = r\, U_\mathrm{g}$, and
${\bf c}U_\mathrm{f} = r\, U_\mathrm{g}$, which due to the condition (i)
yields
${\bf c}\exp[(1-w)(\beta^1 + \beta^2 + \beta^3)] = r\,\exp[(1-w)(\beta^1 + \beta^2 + \beta^3)]$.
This leads to
\be
 -2(c^1-c^2) = r\, ,\qquad 4c^1 = r\, ,\qquad (1-w)(c^1 + c^2 + c^3) = r\, ,
\ee
which yields $c^2 = 3c^1$, $c^3 = 4c^1 w/(1-w)$; w.l.g
we can choose $c^1=1-w$, which gives
\be\label{ccomp} (c_1,c_2,c_3) = (1-w,3(1-w), 4w)\, ,\qquad r =
4(1-w)\, . \ee
The causal character of this vector with respect to the metric
$\cg_{{\alpha} \beta}$ is crucial~\cite{heiugg10}; we obtain
\be
\cg_{{\gamma} \delta}c^\gamma\,c^\delta = -2(1-w)(3 + 13w)\, ,
\ee
which is timelike if $-3/13<w<1$. The above properties of the potential
$U$ implies that the model satisfies the criteria given in~\cite{heiugg10}
for admitting a monotonic function given by\footnote{This is actually the
square of the monotone function in~\cite{heiugg10}, but we find this
form more convenient in the present context.}
\be
M \propto (c^\alpha\,\pi_\alpha)^2\exp\left[-r\,
\frac{\cg_{{\gamma} \delta}c^\gamma\,\beta^\delta}{\cg_{{\gamma} \delta}c^\gamma\,c^\delta} \right]\, .
\ee

We now have to express $M$ in the dynamical systems variables
${\bf S}$. We do so in two steps by first using
$\Sigma_{\alpha\alpha}=\Sigma_\alpha$, and then in the second
step we go over to the presently used dynamical systems
variables via the current gauge fixing (see the previous
discussion about the correspondence between our variables and
those used by~\cite{hewetal01}). Let us define
\be \label{Vdef} (V_\mathrm{g},\,V_\mathrm{f}) = \exp\left[-r\,
\frac{\cg_{{\gamma} \delta}c^\gamma\,\beta^\delta}
{\cg_{{\gamma} \delta}c^\gamma\,c^\delta}
\right](\,U_\mathrm{g},\,U_\mathrm{f})\, , \ee then
\be
\exp\left[-r\,
\frac{\cg_{{\gamma} \delta}c^\gamma\,\beta^\delta}{\cg_{{\gamma} \delta}c^\gamma\,c^\delta}
\right]\,\mathscr{H} =
\exp\left[-r\,
\frac{\cg_{{\gamma} \delta}c^\gamma\,\beta^\delta}{\cg_{{\gamma} \delta}c^\gamma\,c^\delta}
\right]\,T + V_\mathrm{g} + V_\mathrm{f} = 0\, ,
\ee
which yields
\be \exp\left[-r\, \frac{\cg_{{\gamma}
\delta}c^\gamma\,\beta^\delta}{\cg_{{\gamma}
\delta}c^\gamma\,c^\delta} \right] = -\left(\frac{V_\mathrm{g}
+ V_\mathrm{f}}{T}\right) \propto \frac{V_\mathrm{g} +
V_\mathrm{f}}{\pi_0^2(1-\Sigma^2)}\, , \ee
and hence
\be
M \propto \left(\frac{c^\alpha\pi_\alpha}{\pi_0}\right)^2
\frac{V_\mathrm{g} + V_\mathrm{f}}{(1-\Sigma^2)} =
\left(\frac{c^\alpha\pi_\alpha}{\pi_0}\right)^2\, V_\mathrm{f}\,\Omega^{-1}\, ,
\ee
since $V_\mathrm{g}/V_\mathrm{f} = \Omk/\Omega$ and $1-\Sigma^2 = \Omk + \Omega$.

It follows from our definitions that $\pi_\alpha =
\tfrac16\,\pi_0\,(2-\Sigma_\alpha)$, $\pi_0 = \pi_1 + \pi_2 +
\pi_3$, see also~\cite{heiugg10,heietal09}. In combination
with~\eqref{ccomp} we find that this yields
\be \frac{c^\alpha\pi_\alpha}{\pi_0} \propto 1 -
\frac18(1-w)\Sigma_1 - \frac38(1-w)\Sigma_2 - \frac12 w\Sigma_3
= 1 - \frac{\sqrt{3}}{8}(7w-3)\Sb - \frac18(1+3w)\Sp\, . \ee

Next we need to solve for $V_\mathrm{f}$ in terms of the state
space variables $(\Sigma_+, \bar{\Sigma}, \check{\Sigma}^2,
\Omk,v^2)$. This can be done with the help of the constants of
motion $\hat{n}_1,\, l,\, \hat{t}_3$ through equations
\eqref{Vdef}, \eqref{F},~\eqref{FGamma}, \eqref{nbeta}, and
\eqref{Uf}, using the form \eqref{ccomp} of the homothetic
vector {\bf c}. After some algebra one finds that
\be
V_\mathrm{f} \propto \left[(v^2)^{3-7w}\,\Gamma^{8(1-w)}\,G_+^{2(1+7w)}
\left(\frac{\Omk}{\Omega}\right)^{-(1-w)}\right]^{1/(3+13w)}\, .
\ee
Taking $M^{-(3+13w)/2}$ and replacing $\Omega$ with $\Psi$
via~\eqref{Psidef}, and $v^2$ through eq.~\eqref{codazzi}
yields the monotone function $M_\mathrm{H}$, which we give in
the main text. Similar, but much simpler, Hamiltonian methods
were used to find the monotone function $M_\mathrm{CS}$ for the
non-tilted Bianchi type II models, which is also given in the
main text.

% ----------------------------------------------------------------
\end{appendix}
% ----------------------------------------------------------------

% ----------------------------------------------------------------
\end{document}